\newtheorem{thm}{Theorem}[section]
\newtheorem{prop}[thm]{Proposition}
\newtheorem{rem}[thm]{Remark}
\newtheorem{problem}[thm]{Problem}
\newcommand{\R}{{\mathbb{R}}}
\newcommand{\glaub}{\rm glaub}
\newcommand{\cont}{\rm cont}
\newcommand{\reg}{\rm reg}
\newcommand{\front}{\rm front} 
\newcommand{\lag}{\rm lag}     
\begin{document}

\title{
  Contact geometric approach to Glauber dynamics \\
  near a cusp and its limitation
}
\author{\textsc
  Shin-itiro Goto\thanks{ 
    Center for Mathematical Science and Artificial Intelligence,
    Chubu University,
    1200 Matsumoto-cho, Kasugai, Aichi 487-8501, Japan},\quad
  Shai Lerer\thanks{School of Mathematical Sciences,
    Tel Aviv University, 6997801, Tel Aviv, Israel},\quad
  Leonid Polterovich\footnotemark[2]
}

\date{\today}

\maketitle
\abstract{
  We study a nonequilibrium  mean field Ising model
  in the low temperature
  phase regime, where metastable equilibrium states develop a
  cuspidal
(spinodal) singularity. We focus on celebrated Glauber dynamics,
and design a contact Hamiltonian flow which captures
some of its rough features in this regime. We prove, however, that there is an inevitable discrepancy between the scaling laws for the relaxation time in the
Glauber and the contact Hamiltonian dynamical systems.
}

\section{Introduction}

Considerable activity is being devoted to  establish
a solid foundation of nonequilibrium thermodynamics and statistical mechanics
\cite{Kubo,ZubarevI,ZubarevII}.
Among various problems in constructing a viable general theory, establishing a concise description of dynamical properties of thermodynamic systems with phase transitions is one of the main points. Recalling the success with the Ising model
in developing equilibrium statistical mechanics, one recognizes that analyzing a canonical dynamical model is expected to be the first step towards the establishment of a nonequilibrium theory.
One can choose a spin kinetic model as a canonical model, where
its dynamics is called Glauber dynamics.
In particular the model with the mean field type spin coupling
enables one to derive a simple dynamical system (see equation \eqref{eq-kubo} below) for an expectation
or thermal average
of magnetization with some approximation  \cite{Glauber,Suzuki-Kubo}.
This expectation variable is
the thermodynamic conjugate variable of the externally applied magnetic field,
and shows relaxation processes.
Here, roughly speaking, relaxation is
a dynamical process starting from a nonequilibrium state
to a point of the equilibrium state set in thermodynamic phase space.
At equilibrium,
for the Ising model with mean field type
interactions, the equation of state is explicitly derived
by calculating the partition function in the thermodynamic limit, and
the system at equilibrium exhibits a first-order phase transition together with metastable states (see \cite{Goto} for this derivation).
For Glauber dynamics, various scaling relations have been proposed, and
one of them
 is the scaling of the relaxation time near the critical point \cite{A,L}.

To advance our understanding of Glauber dynamics, one may employ reliable and well-developed mathematical theories. One of them is contact geometry whose central object is the Gibbs $1$-form $dz-pdq$ in the 3-dimensional thermodynamic phase space equipped with coordinates $z$ (minus free energy), $p$ (magnetization), and
$q$ (exterior magnetic field) \cite{EP}. The equilibrium
submanifold of the mean field Ising model is represented by a smooth curve. The restriction of the
Gibbs form to the equilibrium curve vanishes, which manifests the fundamental thermodynamic relation. In the presence of the 1st order phase transition, the equilibrium curves necessarily develop singularities when projected to
 the $(z,q)$-plane. We are especially interested in so-called spinodal points where such a singularity is a cusp. The interplay between scaling relations for the relaxation time and the geometry of the equilibrium curve near a spinodal point is the main theme of the present paper.

Another merit of contact geometry is that it provides a natural class of dynamical systems,
so called contact Hamiltonian flows on the thermodynamic phase space, which preserve the Gibbs form
up to a conformal factor.  Loosely speaking, contact Hamiltonian flows are odd-dimensional cousins
 of the standard Hamiltonian flows of classical mechanics. Contact Hamiltonian flows model processes of nonequilibrium thermodynamics
 (see e.g. \cite{Haslach,Grmela,Goto0,Bravetti,V,EP,Goto}).
 In this paper, we focus on designing a contact Hamiltonian system whose dynamics  captures  the equilibria and their stability patterns of the Glauber-Suzuki-Kubo ordinary differential equation (ODE) \eqref{eq-kubo}  near critical points, show
 time-scales near critical points for the phase transition,
 and compare this with other proposals in the literature.
 In addition to contact geometry, we use some basics of
 singularity theory.
 Because of these,
   commonly used notations employed in contact geometry
   are adopted in this paper.

\section{Mean field Ising model}
Thermodynamics of the mean field Ising model in the presence of a  
constant magnetic field
$q$ is described by its free energy
(taken with the opposite sign) $z$,  and magnetization
$p$, where $z$ and $p$ are obtained by dividing by
  the number of total spins $N$.
  In obtaining the thermodynamic quantities $z$ and $p$
  from the microscopic model
  with the standard statistical method, the limit $N\gg 1$ has been taken.
  In addition,
  the statistical average over spin variables is assumed to yield a
  proper scaling of $N$ so that the existence of the thermodynamic limit is guaranteed.
 These involved
  variables can be written in terms of
  commonly used notations in physics as shown in Table \ref{table-notation}.
\begin{table}[bht]
  \begin{center}
  \caption{Notations of various quantities}
  \label{table-notation}
  \begin{tabular}{|c||c|c|}
      \hline
      Quantity & Geometry oriented symbol & Physics oriented symbol\\
      \hline
      Magnetization &$p$ & $m$\\
      \hline
      Magnetic field &$q$&$h$\\
      \hline
      Interaction &$b$&$J$\\
      \hline
      Free energy &$-z$&$f$\\
      \hline
      \end{tabular}
  \end{center}
\end{table}

In equilibrium,
we have the relations
\begin{equation}\label{eq-eqstate}
p=\phi'(q+bp)\;,\qquad
z=\phi(q+bp) - \frac{b}{2}p^{2}\;,
\end{equation}
where
$\phi(u) = \beta^{-1}\ln\left( 2\cosh (\beta u)\right)$ and
the first equation represents the so-called self-consistent equation (see e.g. \cite{Goto}).
The real parameter $\beta >0 $ is the inverse temperature, and $b>0$ is determined by the strength of the interaction and the geometry of the model.
Equations \eqref{eq-eqstate} can be resolved as
$$
q(p) = -bp+ \frac{1}{2\beta}\ln \frac{1+p}{1-p}\;,
$$
and
$$
z(p) = \frac{1}{\beta}\ln 2 - \frac{1}{2\beta}\ln (1-p^2) - \frac{bp^2}{2}\;.
$$
Note that
$$
-z(p)
= - \frac{bp^2}{2}
- pq(p)
+ \frac{1+p}{2\beta}\ln(1+p)
+ \frac{1-p}{2\beta}\ln (1-p)
- \frac{1}{\beta} \ln 2\;,
$$
which is another expression for the free energy of
the mean field Ising model \cite[formula (13.1.14)]{BH},\cite[formula (1)]{ME}.

Consider the curve $L = \{(q(p),p)\}$, $p \in (-1,1)$ in the $(q,p)$-plane
given by the first equation in \eqref{eq-eqstate}.
The point $(q_*, p_*)$  on $L$ with $q_* = q(p_*), dq/dp  (p_*) =0$
is called the {\it spinodal}. Spinodal points exist when
\begin{equation}\label{eq-bbeta}
b\beta > 1\;,
\end{equation}
in which case the value of $p_*$ is given by
\begin{equation}\label{eq-spinod-vsp}
p_* = \pm \sqrt{1-\frac{1}{b\beta}}\;,
\end{equation}
see Figure \ref{ising-fig1}.
In what follows, without loss of generality we choose the
plus sign and put $q_*= q(p_*)$.
The explicit expression of $q_{*}$ in terms of  
$b$ and $\beta$ is
  $$
  q_{*}
  =-b\sqrt{\frac{b\beta-1}{b\beta}}+\frac{1}{\beta}\text{arctanh}
  \sqrt{\frac{b\beta-1}{b\beta}}.
  $$
Note that $q_* < 0$.
Since $dq/dp$ = $1/(dp/dq)$, spinodal points are where $dp/dq$ diverges, and they are physically interpreted as the points where a response of $p$
(magnetization) due to a change of $q$ (exterior magnetic field)
diverges. 

\medskip
\begin{figure}[h]
\label{ising-fig1}
\includegraphics[bb=0.0 0.0 1439.820022 623.922010,width=1.0\hsize]{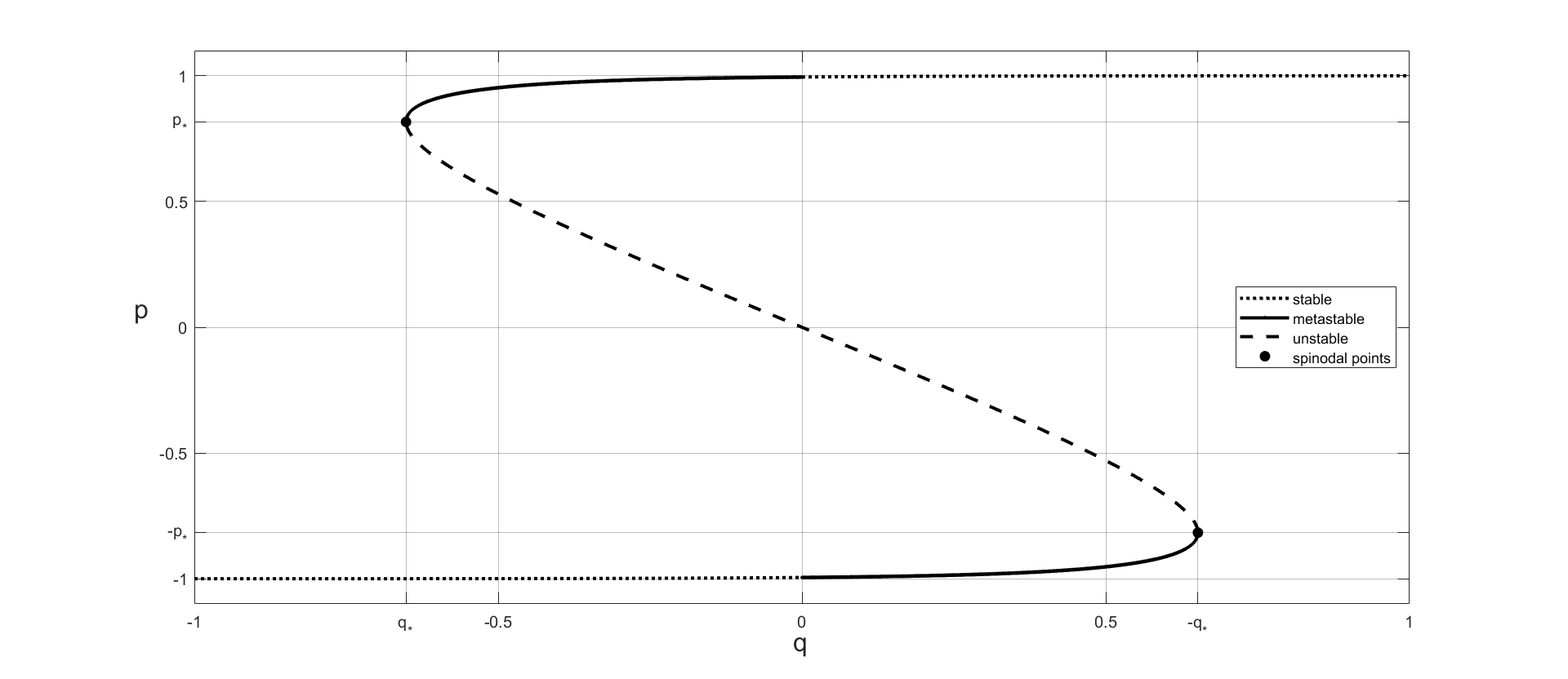}
\centering
\caption{ Lagrangian  projection of $\Lambda$ to the $(q,p)$-plane.}
\end{figure}

In the present paper we study nonequilibrium thermodynamics of the mean field Ising model in a small neighbourhood of the spinodal point. We deal with relaxation processes where the magnetic field $q$ is constant, while the magnetization $p(t)$ converges to a limit $p_{\infty}$ as $t \to \infty$. For $q \neq q_*$, we shall prove existence of the limit
\begin{equation}\label{eq-relax}
-\tau^{-1}:= \lim_{t \to +\infty} \frac{\ln |p(t)-p_{\infty}|}{t}\;.
\end{equation}
When $\tau>0$,  we have an exponential convergence of $p(t)$ to its equilibrium value. 
In this case $\tau$ is called the {\it relaxation time}. We shall focus on $\tau$ as
a function of the magnetic field in two models of nonequilibrium thermodynamics.

The first model, a classical one, is Glauber dynamics.
In \cite{Glauber} Glauber described a Markov process which models relaxation
of the Ising chain to the equilibrium. Furthermore, Glauber \cite{Glauber} and later Suzuki and Kubo
\cite{Suzuki-Kubo} proposed an ordinary differential equation
\begin{equation}\label{eq-kubo}
\dot{p} = -p + \phi'(q+bp)\;,
\end{equation}
where $q$ is constant, see equation (4.3) in  \cite{Suzuki-Kubo}, which provides a molecular field
approximation to the Markov evolution in the thermodynamic limit.
One of the features of this approximation is as follows.
Consider the regime when $b\beta >1$, cf. \eqref{eq-bbeta}.
In this regime, for $q$ from the interval $(q_*,-q_*)$ equation \eqref{eq-kubo} has three equilibrium points:
the maximal and the minimal equilibria are stable, and the one in the middle is unstable.
The stable equilibrium having bigger free energy
(i.e., the smaller value of $z$)
corresponds to the metastable equilibrium of the Markov process. 
Metastability, roughly speaking, means that in the thermodynamic limit, i.e., as the size of the chain increases, the chain spends larger and larger time in the metastable region, see Theorem 13.1 in \cite{BH} for a precise formulation.  
Metastable equilibria are represented
by a solid line on Figure \ref{ising-fig1}.

\medskip\noindent
{\bf Convention:} In this paper, by  
    Glauber dynamics we mean the dynamics of
ODE \eqref{eq-kubo}. By the {\it metastable} equilibrium of  \eqref{eq-kubo}
we mean the dynamically stable equilibrium with the smaller value of $z$.
Let us mention also that equation (4.3) in \cite{Suzuki-Kubo} contains 
a multiplicative
factor responsible for the time units. 
We omit it, tacitly assuming it to be $1$, 
for the sake of simplicity of the notation. 

\medskip
The second model of relaxation processes which we consider is the contact Hamiltonian dynamics
in the thermodynamic phase space. The thermodynamic phase space $\R^3$ is equipped with the coordinates
$z$ (free energy taken with the opposite sign), $p$ (magnetization), and $q$ (magnetic field). The contact form is given by $dz-pdq$. The equilibrium Legendrian submanifold $\Lambda$ is given by
equations \eqref{eq-eqstate}. We shall consider the projections
$$
\zeta_{\front}:\R^3 \to \R^2, \quad 
(z,q,p) \mapsto (z,q)
$$
and
$$
\zeta_{\lag}: \R^3 \to \R^2, \quad 
(z,q,p) \mapsto (q,p)
$$
to the $(z,q)$- and $(q,p)$- planes, respectively.
The first projection
is called the {\it front projection}, and the image $\zeta_{\front}(\Lambda)$ is called the front and is denoted by $\Sigma$. It plays a crucial role 
in future considerations. The second projection is called the
{\it Lagrangian projection},  and we denote
$L = \zeta_{\lag}(\Lambda)$. We have encountered earlier this curve
when we introduced the spinodal points $(\pm q_*,p_*) \in L$.
Let $(z_*,\pm q_*,p_*)$ be the lift of $(\pm q_*,p_*)$ to $\Lambda$.
Then the points
$$
C = \zeta_{\lag}( ( z_*, q_*,p_*)) = (z_*,q_*)
$$
and
$$
C' = \zeta_{\lag}( ( z_*,-q_*,p_*)) = (z_*,-q_*)
$$
correspond to  the left and  the right cusps of the front, respectively 
(see Figure \ref{ising-fig2}). 
  Note that $z_* = z(p_*)$. Abusing the language, we also call $C$ and $C'$
spinodal points.
 Another singularity of the front is its double point $D= (z^*,0)$;
 it will be ignored in the present paper. Write
 $\Sigma_{\reg}= \Sigma \setminus \{C,C', D\}$ for the regular part of $\Sigma$.
Dotted, solid, and dashed lines 
on $\Sigma_{\reg}$ are denoted by $S_\pm$, $M_\pm$ and $U$; they correspond to stable, metastable and unstable equilibria of Glauber dynamics,
respectively.
In the present paper we focus on a small neighbourhood of the spinodal point $C$, see the shaded region on Figure \ref{ising-fig2}

\medskip
\begin{figure}[h]
  \label{ising-fig2}
\includegraphics[bb=0.0 0.0 1920.0 794.0,width=1.0\hsize]{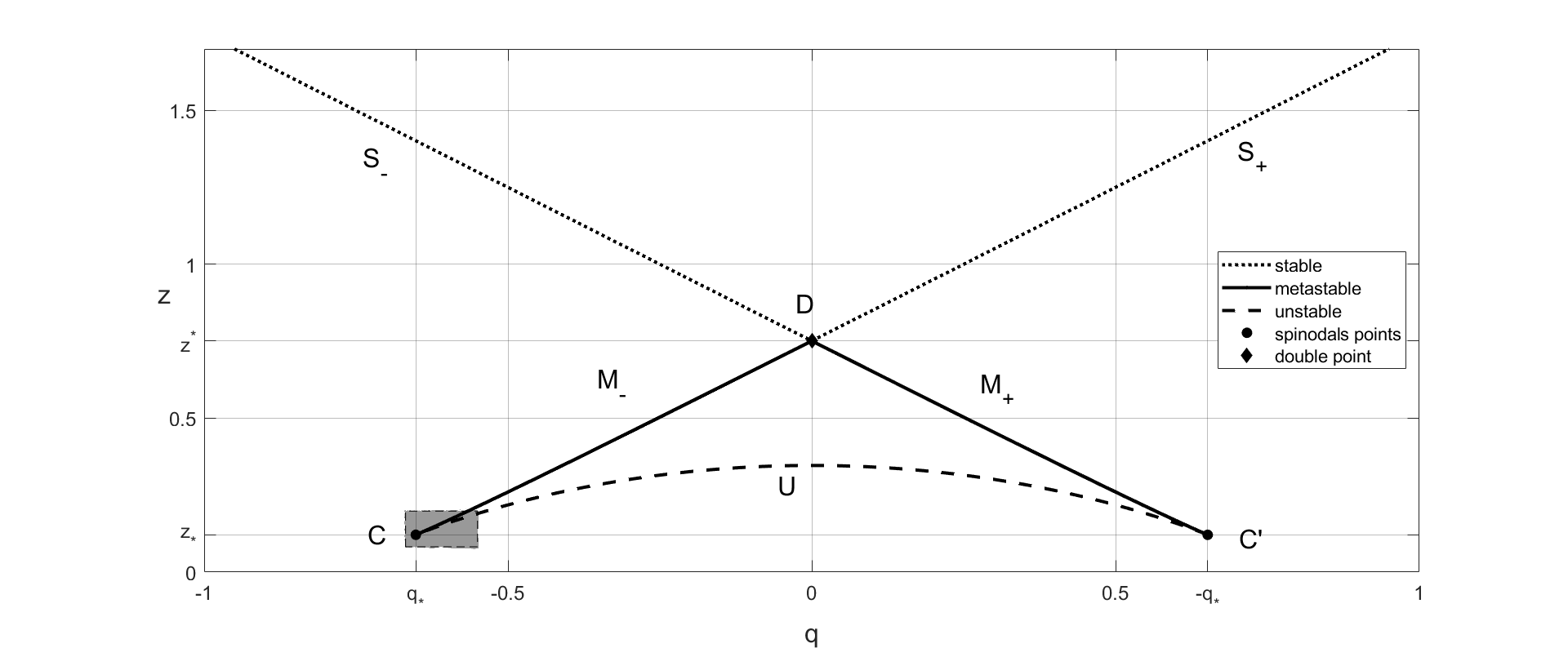}
\centering
\caption{Front $\Sigma$. Projection of Legendrian submanifold $\Lambda$ to
  the $(z,q)$-plane.}
\end{figure}

Now we are ready to outline the main findings of the present paper.

First, {\it  we design a contact Hamiltonian which roughly speaking
  ``imitates'' Glauber dynamics,}
see Section \ref{sec-design}. We refer the reader to \cite{Goto0,Goto,EP} for earlier steps in this direction.  Since the magnetic field is constant, such a Hamiltonian necessarily has a form $H(z,q)$ (i.e., it is independent of the magnetization $p$), and moreover since we wish to keep
$\Lambda$ as an equilibrium submanifold, $H(z,q)$ must vanish on the front $\Sigma$.
Let us emphasize that $H$ is defined near the spinodal point
$(z_*,q_* )$.
An important feature we wish to keep is that the unstable and metastable
equilibria of Glauber dynamics (cf. Convention above)
are unstable and stable, respectively, for the contact dynamics generated
by $H$.

Second, after designing such a Hamiltonian, {\it we prove the existence of the relaxation time
for the contact dynamics  provided $q \neq q_*$, and calculate
it}, see Section \ref{sec-contdyn}. It turns out (see Theorem \ref{thm-main-main} in Section \ref{sec-relax}) that the dependence of $\tau$ on $q$ is different
in the Glauber ($\sim (q-q_*)^{-1/2}$) and in the contact ($\sim (q-q_*)^{-3/2}$) cases, which highlights a subtle inconsistency between the two models; furthermore, we prove that this inconsistency is inevitable.

Third, we show that for $q=q_*$, {\it both in the Glauber and in the contact cases, the quantity $p(t)-p_*$ decays according to the {\it power laws} $\sim t^{-1}$}, see Theorem \ref{thm-main-main-2} in Section \ref{sec-power}. In the
contact case,  we are able to achieve such a law by a proper choice of the contact Hamiltonian.

Let us pass to precise formulations.

\section{Designing a contact Hamiltonian}\label{sec-design}

\begin{prop}\label{prop-1}  There exists a $C^{\infty}$-smooth function $H(z,q)$
defined in a neighbourhood $\mathcal{W}$ of $(q_*,z_*)$
with the following properties:
\begin{itemize}
\item The zero level $\{H=0\}$ coincides with $\Sigma \cap \mathcal{W}$;
\item The derivative $\partial H/\partial z (z,q)$ is strictly negative on $M_- \cap \mathcal{W}$ and strictly positive on $U \cap \mathcal{W}$.
\end{itemize}
\end{prop}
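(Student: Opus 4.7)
My plan is to construct $H$ explicitly by reducing the front $\Sigma$ near $C$ to its semi-cubic cusp normal form. First I would parameterize the equilibrium Legendrian $\Lambda$ near $C$ by the magnetization $p$. Because $\Lambda$ satisfies $dz = p\,dq$, we have $z'(p) = p\,q'(p)$, so the spinodal condition $q'(p_*)=0$ forces $z'(p_*)=0$ as well. A direct computation gives $q''(p_*) = 2p_*/[\beta(1-p_*^2)^2] > 0$, together with
\[
z(p)-z_* - p_*\bigl(q(p)-q_*\bigr) = \tfrac{q''(p_*)}{3}(p-p_*)^3 + O\bigl((p-p_*)^4\bigr),
\]
which confirms that the cusp at $C$ is genuinely semi-cubic. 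Since $q-q_*$ has a positive-definite quadratic leading term in $s := p-p_*$, a Morse-type substitution produces a smooth parameter $\sigma$ (a diffeomorphism of neighbourhoods of $0$, with the same sign as $s$) such that $q-q_* = \sigma^2$ identically. Writing $z - z_* = \psi(\sigma)$ and splitting $\psi = \psi_{\text{even}} + \psi_{\text{odd}}$, one checks that $\psi_{\text{even}}(\sigma) = F(\sigma^2)$ and $\psi_{\text{odd}}(\sigma)^2 = \sigma^6\,G(\sigma^2)$ for smooth $F, G$ with $F'(0)=p_*$ and $G(0) > 0$.

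With this normalization in hand I would set
\[
H(z,q) \;:=\; -\bigl(z - z_* - F(q-q_*)\bigr)^2 \;+\; (q-q_*)^3\,G(q-q_*),
\]
which is manifestly $C^\infty$ in a neighbourhood $\mathcal W$ of $(q_*,z_*)$. For $q\geq q_*$ the factorization
\[
\bigl(z-z_*-F(q-q_*)\bigr)^2 - (q-q_*)^3 G(q-q_*) \;=\; \bigl(z-z_*-\psi(\sigma)\bigr)\bigl(z-z_*-\psi(-\sigma)\bigr),
\]
with $\sigma = \sqrt{q-q_*}$, shows that $H$ vanishes exactly on the two cusp branches of $\Sigma$; for $q<q_*$ the second term of $H$ is strictly negative while the first is non-positive, so $H<0$ there. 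Hence $\{H=0\}=\Sigma\cap\mathcal W$. The vertical derivative is $\partial H/\partial z = -2\bigl(z-z_*-F(q-q_*)\bigr)$, which on the sheet $\sigma>0$ reduces to $-2\psi_{\text{odd}}(\sigma)$ and on the sheet $\sigma<0$ to $+2\psi_{\text{odd}}(\sigma)$.

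To finish I would identify these two sheets with $M_-$ and $U$. Linearizing the Glauber ODE \eqref{eq-kubo} at an equilibrium gives the derivative $-1 + b\beta(1-p^2)$, which is negative precisely when $|p|>p_*$, so the upper sheet $\sigma>0$ (where $p>p_*$) is the stable, hence metastable, branch $M_-$, and the lower sheet $\sigma<0$ is the unstable branch $U$. On $M_-$ we have $\psi_{\text{odd}}(\sigma)\sim c\sigma^3>0$ with $c>0$, giving $\partial H/\partial z<0$; on $U$ the same quantity is negative, giving $\partial H/\partial z>0$, as required. The only genuinely delicate point in this plan is smoothness of $H$: it rests on the observation that only even powers of $\sigma$ survive in $\psi_{\text{even}}$ and in $\psi_{\text{odd}}^2$, so both descend to $C^\infty$ functions of $q-q_*$. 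Everything else amounts to Taylor-coefficient bookkeeping and the elementary stability analysis already invoked.
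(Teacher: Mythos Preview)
Your approach is essentially the same as the paper's: both apply the Morse lemma to write $q-q_*=\sigma^2$, split the resulting expansion into even and odd parts in $\sigma$, and define $H$ as (minus) the square of the ``odd'' piece plus the cubic term in $Q$, so that the front is cut out exactly and $\partial H/\partial z$ has the required signs on the two branches. The only cosmetic differences are that the paper works in the sheared coordinate $Z=z-p_*q-(z_*-p_*q_*)$ (so its $\phi_0$ is your $F(Q)-p_*Q$) and inserts an extra factor $(1-aQ)$, which is irrelevant for this proposition but is used later to control the power law at the spinodal point.
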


\medskip
\noindent
{\bf Proof:}
Let $(z_*,q_*,  p_*)$
be one of two spinodal points, with $p_* = \sqrt{1- (b\beta)^{-1}}$.
Recall that $q'(p_*)=0$. Denote $\theta:= q''(p_*) >0$.

Introduce a new parameter $P=p-p_*$ along the equilibrium submanifold $\Lambda$,
and make a change of variables:
$$
Z(P) = z(P+p_*) - p_*q(P+p_*)-(z_*-p_*q_*)\;,\qquad
Q(P) =q(P+p_*)-q_*\;.
$$
Clearly $Z(0)=Q(0)=0$.
Since $\Lambda$ is Legendrian, $z' = pq'$ (where by prime we denote the derivative with respect to $p$, and, with a slight abuse of notation, also with respect to $P$).
This yields $z'' = q' + pq''$ and $z''' = 2q''+pq'''$.

We have
\begin{align*}
Z'(0)   &= z'(p_*) - p_*q'(p_*)=0\;,\\
Z''(0)  &= q'(p_*) +p_*q''(p_*)- p_*q''(p_*)=0\;,\\
Z'''(0) &= 2q''(p_*) + p_*q'''(p_*) - p_*q'''(p_*) =2\theta\;.
\end{align*}
It follows
\begin{equation}\label{eq-Z}
  Z(P) = \frac{1}{3} \theta P^3 + O(P^4)\;.
\end{equation}
Similarly, $Q'(0) = 0$, $Q''(0) = \theta$, so
\begin{equation}\label{eq-Q}
  Q(P) = \frac{1}{2}\theta P^2 + O(P^3)\;.
\end{equation}

Now we use a standard trick of singularity theory (see \cite{W}, Lemma 2.3.1).
Notice that $P=0$ is a  non-degenerate critical point
of $Q\left(P\right)$ and by the Morse lemma there exists a local coordinate $s$ near $P=0$
such that $P = \sqrt{\theta/2}s + O(s^2)$ and $Q\left(s\right)=s^{2}$.
Rewrite $Z\left(P\right)$ with the parameter $s$:
$$
Z\left(s\right)=\sum_{n=3}^{\infty}\alpha_{n}s^{n}\;.
$$
Let $\phi_{0}\left(Q\right):=\sum_{n=2}^{\infty}\alpha_{2n}Q^{n}$ and
$\phi_{1}\left(Q\right):=\sum_{n=0}^{\infty}\alpha_{2n+3}Q^{n}$.
Since $Q\left(s\right)=s^{2}$,
we have that
$$
Z=\phi_{0}\left(Q\right)+sQ\phi_{1}\left(Q\right)\;,
$$
and
$$
s^{2}Q^{2}\phi_{1}\left(Q\right)^{2}=\left(Z-\phi_{0}\left(Q\right)\right)^{2}\;.
$$
Thus, since $Q(s)=s^2$,
the front near the spinodal point is given by the equation
$$
-\left(Z-\phi_{0}\left(Q\right)\right)^{2}+Q^{3}\phi_{1}\left(Q\right)^{2}
=0\;.
$$
Choose the Hamiltonian in a neighborhood of the spinodal point as
\begin{equation}\label{eq-H-full}
  H\left(Z,Q\right)
  =\left(1-aQ\right)\left(-\left(Z-\phi_{0}\left(Q\right)\right)^{2}
+Q^{3}\phi_{1}\left(Q\right)^{2}\right)\;,
\end{equation}
where $a$ is a real parameter.
Choose $Z$ and $Q$ sufficiently small, and also
  the parameter $a$ sufficiently small.
\newline
One readily checks that $H$ is as required.
\qed

The parameter $a$  will be used later in Section \ref{sec-power} to
adjust the relaxation rate at the spinodal point.

For future use, let us record the following relations:
\begin{equation}
\label{eq-future}
\phi_0(Q) = O(P^4)\;,\qquad
\phi_0'(Q) = O(P^2)\;.
\end{equation}

\section{Contact dynamics}\label{sec-contdyn}

Introduce new coordinates $(Z,Q,P)$ near the spinodal point $( z_*, q_*,p_*)$ by
$$
Z=z - p_*q-(z_*-p_*q_*)\;,\qquad
Q=q-q_*\;,\qquad
P= p-p_*\;.
$$
In these coordinates
the contact form $dz-pdq$ equals $dZ-PdQ$,  and the dynamics
generated by a contact Hamiltonian $H(Z,Q,P)$ is governed by the system
of ordinary differential equations
\begin{equation*}
\begin{cases}
\dot{Z} = H -P\frac{\partial H}{\partial P}  \\
\dot{Q} = - \frac{\partial H}{\partial P}\\
\dot{P} =P\frac{\partial H}{\partial Z} + \frac{\partial H}{\partial Q}\\
\end{cases}
\end{equation*}
In our setting the magnetic field $Q$ is constant and
is considered as a parameter.
Therefore, in view of the second equation,
we are interested in Hamiltonians $H$ depending only on $Z$ and $Q$.
In this case the above system simplifies to a triangular one,
\begin{equation}\label{eq-main-system}
\begin{cases}
\dot{Z} = H \\
\dot{P} =P\frac{\partial H}{\partial Z} + \frac{\partial H}{\partial Q}
\end{cases}
\end{equation}
(We refer the reader, for example, to \cite{EP} for a study of Hamiltonians depending on all the variables $Z,Q$, and $P$ in the context of a non equilibrium Ising model.)
Take any Hamiltonian as in Proposition \ref{prop-1}.
The front $\Sigma \cap \mathcal{W}$
is given by $Z=Z(P), Q = Q(P)$ with $P= dZ/dQ$. Recall that $H(Z(P),Q(P)) =0$ for all $P$.
Differentiating by $P$, we get that
\begin{equation}\label{eq-R}
P\frac{\partial H}{\partial Z} (Z(P),Q(P)) + \frac{\partial H}{\partial Q} (Z(P),Q(P))=0\;.
\end{equation}
(Note that this means that the right hand side of the second equation in \eqref{eq-main-system} vanishes, as it should be at the equilibrium point.)

Pick $P_{\infty}$ close to $0$, and put $Z_{\infty} = Z(P_{\infty})$, $Q_{\infty}= Q(P_{\infty})$.
Introduce the function
\begin{equation}\label{eq-R-1}
R(Z) := P_{\infty}\frac{\partial H}{\partial Z} (Z,Q_{\infty})  + \frac{\partial H}{\partial Q} (Z,Q_{\infty})\;.
\end{equation}
By \eqref{eq-R} we have that $R(Z_{\infty})=0$.

\begin{thm}\label{thm-main}
Assume that
\begin{equation} \label{eq-gamma}
  \frac{\partial H}{\partial Z} (Z_{\infty},Q_{\infty})
  = -\gamma < 0\;,
\end{equation}
and
\begin{equation} \label{eq-ineq-delta}
\frac{dR}{dZ} (Z_{\infty})= \delta \neq 0\;.
\end{equation}
Then for every initial condition $(Z(0),Q_{\infty},P(0))$ in a sufficiently small neighbourhood of $(Z_{\infty},Q_{\infty}, P_{\infty})$ with
$P(0) \neq P_{\infty}$ we have
\begin{equation}
\label{eq-relax-contact}
\lim_{t \to +\infty} \frac{\ln |P(t)-P_{\infty}|}{t} = -\gamma\;.
\end{equation}
\end{thm}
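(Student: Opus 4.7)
The key observation is that because $\dot Q = 0$ and $H$ depends only on $(Z,Q)$, the system \eqref{eq-main-system} with $Q\equiv Q_\infty$ is triangular: the $Z$-equation is an autonomous scalar ODE, and the $P$-equation is a non-autonomous \emph{linear} ODE in $P$ once $Z(t)$ is known. The plan is to analyze these two in order, then extract the rate by variation of parameters.

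First, I would study $\dot Z = H(Z,Q_\infty)$. By \eqref{eq-R} applied at $P=P_\infty$, the point $(Z_\infty,Q_\infty)$ lies on the front, so $H(Z_\infty,Q_\infty)=0$; and by \eqref{eq-gamma}, $Z_\infty$ is a linearly attracting equilibrium with rate $\gamma$. Introducing the gauge $v(t):=(Z(t)-Z_\infty)e^{\gamma t}$, a short computation gives $\dot v = O(v^2 e^{-\gamma t})$; hence for $Z(0)$ in a small neighbourhood of $Z_\infty$, $v(t)$ converges to a finite limit $v_\infty$ with $v_\infty\neq 0$ iff $Z(0)\neq Z_\infty$. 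Consequently
\begin{equation*}
  Z(t)-Z_\infty \;=\; v_\infty e^{-\gamma t}(1+o(1)), \qquad t\to+\infty.
\end{equation*}

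Second, setting $\tilde P := P-P_\infty$, the second line of \eqref{eq-main-system} rewrites (using the definition \eqref{eq-R-1} and the constancy of $Q$) as the scalar linear ODE
\begin{equation*}
 \dot{\tilde P} \;=\; a(t)\,\tilde P \;+\; R(Z(t)), \qquad a(t):=\tfrac{\partial H}{\partial Z}(Z(t),Q_\infty).
\end{equation*}
From the first step $a(t)=-\gamma+O(e^{-\gamma t})$, so $A(t):=\int_0^t a(s)\,ds=-\gamma t+B_\infty+o(1)$ with $B_\infty$ finite. By \eqref{eq-R} we have $R(Z_\infty)=0$, hence \eqref{eq-ineq-delta} together with the asymptotics of $Z(t)$ gives $R(Z(t))=\delta\,v_\infty\,e^{-\gamma t}(1+o(1))$. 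Variation of parameters now yields
\begin{equation*}
 \tilde P(t) \;=\; e^{A(t)}\tilde P(0) \;+\; e^{A(t)}\int_0^t e^{-A(s)}R(Z(s))\,ds,
\end{equation*}
and the integrand of the last integral has a finite, generally nonzero limit $e^{-B_\infty}\delta v_\infty$ as $s\to\infty$, so the integral grows linearly in $t$ and the inhomogeneous contribution is $\sim \delta v_\infty\,t\,e^{-\gamma t}$, while the homogeneous piece is $\sim c_1 e^{-\gamma t}$. Combining, $\tilde P(t)=(c_1+\delta v_\infty t)e^{-\gamma t}(1+o(1))$; if instead $v_\infty=0$ then $Z(t)\equiv Z_\infty$ and $\tilde P(t)=\tilde P(0)e^{A(t)}$ with $\tilde P(0)\neq 0$. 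In both cases $\tilde P(t)$ is eventually nonzero and a logarithmic derivative gives exactly \eqref{eq-relax-contact}.

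The only subtle point I anticipate is the resonance between the decay rate of $Z(t)-Z_\infty$ and the intrinsic rate of the $\tilde P$-equation, both equal to $\gamma$: because of this the solution picks up a polynomial-in-$t$ prefactor and one cannot conclude $|\tilde P(t)|\sim \mathrm{const}\cdot e^{-\gamma t}$. Hypothesis \eqref{eq-ineq-delta} is precisely what ensures that this prefactor is at most linear and has nonzero leading coefficient (whenever $v_\infty\neq 0$), which is exactly what the exponential limit \eqref{eq-relax-contact} needs.
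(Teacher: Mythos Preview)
Your proposal is correct and follows essentially the same route as the paper: exploit the triangular structure, show that $Z(t)\to Z_\infty$ at exponential rate $\gamma$, then solve the linear equation for $P-P_\infty$ by variation of parameters and use $\delta\neq 0$ to see that the Duhamel integral grows linearly, with the case $Z(0)=Z_\infty$ handled separately. The only cosmetic difference is that the paper observes directly that $H(Z(t),Q_\infty)$ is an integrating factor (since $\tfrac{d}{dt}\ln|H(Z(t),Q_\infty)|=\partial_Z H=a(t)$) and applies L'H\^opital to $\ln|H|/t$ and to $R/H$, whereas you first extract the asymptotic $Z(t)-Z_\infty\sim v_\infty e^{-\gamma t}$ and then work with $e^{A(t)}$; the two computations are equivalent.
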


\medskip\noindent{\bf Proof:}
{\sc Case I:} Assume first that $Z(0) >  Z_{\infty}$.  Thus
\begin{equation} \label{eq-neq-vsp}
H(Z(0),Q_\infty) \neq 0\;.
\end{equation}
Let $Z(t)$ be the solution of the first equation of
system \eqref{eq-main-system}.
Observe that $Z(t) \to Z_{\infty}$, and by L'H\^{o}pital's rule
\begin{equation}\label{eq-H-as}
\lim_{t \to +\infty} \frac{\ln (-H(Z(t),Q_{\infty}))}{t} = \lim_{t \to +\infty} \frac{\partial H/\partial Z (Z(t),Q_{\infty}) \cdot H(Z(t),Q_{\infty})}{H(Z(t),Q_{\infty})} = -\gamma\;.
\end{equation}
Substitute $Z(t)$ into the second equation of \eqref{eq-main-system}, and rewrite it as
$$
\frac{d}{dt}(P-P_{\infty})
= (P-P_{\infty})\cdot \frac{\partial H}{\partial Z} (Z(t),Q_{\infty})  + R(Z(t))\;.
$$
Solving it, we get
\begin{equation}\label{eq-P-sol}
  P(t)-P_{\infty} = H(Z(t),  Q_{\infty}) \cdot I(t),\;\;\text{where}\;\;
  I(t):=  \frac{P(0)-P_{\infty}}{H(Z(0),  Q_\infty )}
  + \int_0^t \frac{R(Z(s))}{H(Z(s),  Q_\infty )}ds\;.
\end{equation}

Let us note that since $\dot{Z} = H$, inequality \eqref{eq-neq-vsp} yields
$H(Z(s),Q_\infty) \neq 0$ for all $s$.
Therefore, we can apply L'H\^{o}pital's rule
in combination with  \eqref{eq-gamma} and \eqref{eq-ineq-delta} and get that
$\lim_{s \to +\infty} \frac{R(Z(s))}{H(Z(s), Q_\infty )} = - \delta/\gamma \neq 0$.
Therefore, there exists $0 < c_1 < c_2$ such that for all $t$ large enough
\begin{equation} \label{eq-est}
c_1 t \leq |I(t)| \leq c_2t\;.
\end{equation}
Thus,
$$
\lim_{t \to +\infty} \frac{\ln |P(t)-P_{\infty}|}{t} = \lim_{t \to +\infty} \frac{\ln |H(Z(t), Q_\infty )|}{t}
+ \lim_{t \to +\infty} \frac{\ln |I(t)|}{t}\;,
$$
where the first term  on the right hand side equals $-\gamma$ by \eqref{eq-H-as},
and the second term vanishes by \eqref{eq-est}.
This yields the theorem if $Z(0) >  Z_{\infty}$.
The case $Z(0) <  Z_{\infty}$ is analogous.

\medskip\noindent
{\sc Case II:} Assume now $Z(0)=Z_{\infty}$. Then
$$
P(t) - P_{\infty} = e^{-\gamma t} (P(0) - P_{\infty})\;,
$$
and \eqref{eq-relax-contact} follows immediately.
This completes the proof.
\qed

\begin{rem}\label{rem-product}
  A direct calculation shows that the Hamiltonian $H(Z,Q)$ given
  by \eqref{eq-H-full}
  satisfies assumptions of Proposition \ref{prop-1} as well as \eqref{eq-gamma}
  and \eqref{eq-ineq-delta}. Therefore,
  the conclusion of Theorem \ref{thm-main} holds for $H$.
  A slightly more involved argument,
  which we leave to the reader,
  shows that the same is true for the Hamiltonian $F(Z,Q)H(Z,Q)$,
  where $F(Z,Q)$ is any positive smooth function defined
  in a neighbourhood of $0$.
\end{rem}

\begin{rem}\label{rem-GrobHart}
Theorem \ref{thm-main} readily follows from an enhanced version of the
Grobman-Hartman classical theorem \cite{N}, which is applicable in a much more
general situation. For the sake of completeness, we presented an elementary direct argument
working in our specific situation.
\end{rem}

\section{Comparison of the relaxation times} \label{sec-relax}

Now we are ready to formulate our first main result,
where $Q=q-q_{*}$ is considered as a small parameter.

\begin{thm} [Main Theorem-1] \label{thm-main-main}
$\;$
\begin{itemize}
\item[{(i)}] {\bf (Contact relaxation time)} The relaxation time $\tau_{\cont}$ of the Hamiltonian \eqref{eq-H-full}
satisfies the scaling law
\begin{equation} \label{eq-cont-relax}
 \tau_{\cont} \sim Q^{-3/2}\;.
\end{equation}
\item[{(ii)}] {\bf (No-Go theorem)} For every contact Hamiltonian $H(Z,Q)$ vanishing
on the front near the spinodal point and satisfying assumptions \eqref{eq-gamma}
and \eqref{eq-ineq-delta}, the relaxation time $\tau_{\cont}$ is at least
$\text{const} \cdot Q^{-3/2}$.
\item[{(iii)}] {\bf (Glauber relaxation time)} The relaxation time
  $\tau_{\glaub}$
of Glauber dynamics given by equation \eqref{eq-kubo} satisfies the scaling law
 \begin{equation}\label{eq-glaub-relax}
\tau_{\glaub} \sim Q^{-1/2}\;.
\end{equation}
In particular, by (ii) this scaling law cannot be modeled by contact dynamics.
\end{itemize}
\end{thm}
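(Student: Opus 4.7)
I would linearize \eqref{eq-kubo} at the stable equilibrium $p_\infty$ near $p_*$. The eigenvalue $\lambda = -1+b\phi''(q+bp_\infty)$, combined with the identity $\phi''(q+bp)=1/(q'(p)+b)$ (obtained by differentiating the equilibrium relation $p=\phi'(q+bp)$ along the curve), gives $\lambda = -q'(p_\infty)/(q'(p_\infty)+b)$. Using the expansion $Q(P)=(\theta/2)P^2 + O(P^3)$ from the proof of Proposition \ref{prop-1}, we have $q'(p_*+P) = \theta P + O(P^2)$ and $P \sim +\sqrt{2Q/\theta}$ on the stable branch; hence $|\lambda| \sim \sqrt{Q}$ and $\tau_\glaub = 1/|\lambda| \sim Q^{-1/2}$.

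\textbf{Part (i).} I would apply Theorem \ref{thm-main} to the Hamiltonian \eqref{eq-H-full}. Direct differentiation gives $\partial H/\partial Z = -2(1-aQ)(Z-\phi_0(Q))$, and on the stable branch $Z_\infty - \phi_0(Q_\infty) = Q_\infty^{3/2}\phi_1(Q_\infty)$ with $\phi_1(0)=\theta/3>0$, so
\[
\gamma = 2(1-aQ_\infty)\phi_1(Q_\infty)Q_\infty^{3/2} \sim Q^{3/2}.
\]
Assumption \eqref{eq-ineq-delta} is verified by $R'(Z_\infty) = -2P_\infty + O(Q_\infty) \neq 0$, using $P_\infty \sim \sqrt{Q_\infty}$. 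Theorem \ref{thm-main} then yields $\tau_\cont = 1/\gamma \sim Q^{-3/2}$.

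\textbf{Part (ii), the main obstacle.} Let $H$ be an arbitrary smooth Hamiltonian vanishing on the front near the cusp. Both branches admit the parametrization $Z_\pm(Q) = \phi_0(Q) \pm \epsilon(Q)$ with $\epsilon(Q) := Q^{3/2}\phi_1(Q)$. Taylor expanding $H(\phi_0(Q)\pm\epsilon,Q)$ in the first argument, both expansions vanish identically in $Q$, and their difference isolates the odd-in-$\epsilon$ part
\[
2H_Z(\phi_0(Q),Q)\,\epsilon + \tfrac{1}{3}H_{ZZZ}(\phi_0(Q),Q)\,\epsilon^3 + O(\epsilon^5) = 0,
\]
so $H_Z(\phi_0(Q),Q) = O(\epsilon^2) = O(Q^3)$. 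A second Taylor expansion in $Z$ propagates this bound to a front point:
\[
H_Z(Z_+(Q), Q) = H_{ZZ}(0,0)\,\phi_1(0)\,Q^{3/2} + o(Q^{3/2}),
\]
so $|\gamma| \leq C \cdot Q^{3/2}$ for some $H$-dependent constant $C$, giving $\tau_\cont \geq C^{-1} Q^{-3/2}$. The conceptual point is that the symmetry between the two cusp branches forces $H_Z$ to vanish to order $Q^3$ along the ``mid-curve'' $Z=\phi_0(Q)$, so the dominant contribution to $H_Z$ on the front itself is the linear Taylor correction $H_{ZZ}(0,0)\cdot\epsilon$, whose size is fixed by the semicubical (cusp) geometry to be $Q^{3/2}$. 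No choice of $H$ can evade this rigidity, which is what makes (ii) the heart of the theorem.
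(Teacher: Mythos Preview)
Your proof is correct in all three parts. Parts (i) and (iii) follow the paper's own arguments with only cosmetic differences: in (iii) you reach $u'(p_\infty)$ via the identity $\phi''(q+bp)=1/(q'(p)+b)$ rather than by direct substitution of $p_*^2=1-1/(b\beta)$, and in (i) you write $Z_\infty-\phi_0(Q_\infty)=Q_\infty^{3/2}\phi_1(Q_\infty)$ rather than tracking $P$-orders, but the content is identical. (One harmless slip: $\phi_1(0)=\alpha_3=\theta^{5/2}/(3\sqrt{2})$, not $\theta/3$; only its positivity is used.)

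Part (ii), however, takes a genuinely different route from the paper. The paper assigns weights $3$ to $Z$ and $2$ to $Q$, expands $H=\sum r_{mn}Z^mQ^n$, and uses $H(Z(P),Q(P))\equiv 0$ to eliminate the low-weight coefficients $r_{02}$ and $r_{11}$ by a Newton-polygon argument; this leaves $\partial H/\partial Z=2r_{20}Z+r_{12}Q^2+(\text{higher weight})$, and on the front $Z\sim Q^{3/2}$ gives the bound. Your argument instead exploits the reflection symmetry of the two branches $Z_\pm=\phi_0(Q)\pm\epsilon(Q)$ about the mid-curve: subtracting the two vanishing conditions isolates the odd Taylor part and forces $H_Z(\phi_0(Q),Q)=O(\epsilon^2)=O(Q^3)$, after which a single linear Taylor step out to the branch gives the $Q^{3/2}$ bound, together with the explicit leading coefficient $H_{ZZ}(0,0)\phi_1(0)$. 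Your approach is more geometric and delivers the leading term for free; the paper's is more algebraic and needs only the orders $Z\sim P^3$, $Q\sim P^2$ rather than the full normal form $(\phi_0,\phi_1)$. Both are short and self-contained.
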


\medskip
\noindent
{\bf Proof of (i):}
We combine the results of Section \ref{sec-design}
with Theorem \ref{thm-main}. Recall that  by \eqref{eq-H-full}
$$
 H\left(Z,Q\right)=\left(1-aQ\right)\left(-\left(Z-\phi_{0}\left(Q\right)\right)^{2}
 +Q^{3}\phi_{1}\left(Q\right)^{2}\right)\;,
 $$
where $Q=q-q_* >0$, and $a$ is a real parameter
which we choose sufficiently small. We calculate that
at a point $Z=Z(P), Q=Q(P)$ of the front
$$
\frac{\partial H}{\partial Z}(Z,Q)
= -2(1-aQ)(Z-\phi_{0}(Q))
= -2\left(\frac{\theta}{3}\right)P^3 +O(P^4)\;,
$$
where the last equality follows from by \eqref{eq-Z}, \eqref{eq-Q}, \eqref{eq-future}.
Since $P_{\infty} > 0$ (meaning that the point we are working with is metastable)
and $P_{\infty}$ is chosen to be close to $0$, we have
$$
\gamma:= \frac{\partial H}{\partial Z}(Z_{\infty},Q_{\infty}) < 0\;,
$$
yielding assumption \eqref{eq-gamma}.
We record that
\begin{equation}\label{eq-gamma-as}
\gamma \sim Z_{\infty} \sim Q^{3/2}\;.
\end{equation}
Furthermore,
$$
\frac{dR}{dZ}(Z_{\infty})
= -2(1-aQ_{\infty})(P_{\infty}-\phi_{0}'(Q_{\infty}))
+ 2a(Z_{\infty} -\phi_0(Q_{\infty}))
=-2P_{\infty} + O(P_{\infty}^2)\;.
$$
Thus for $P_{\infty} > 0$ sufficiently close to $0$ we have by \eqref{eq-Z}, \eqref{eq-Q}, \eqref{eq-future} that $dR/dZ(Z_{\infty}) \neq 0$, and
hence assumption \eqref{eq-ineq-delta} holds.
Therefore, we can apply Theorem \ref{thm-main}. By formula \eqref{eq-gamma-as}, we get
the relaxation time $\tau_{\cont} \sim Q^{-3/2}$, as required.
\qed

\medskip\noindent
{\bf Proof of (ii):} Let $H(Z,Q)$ be any Hamiltonian vanishing
on the front $\Sigma$ in a neighbourhood of the spinodal point and satisfying
assumptions \eqref{eq-gamma} and \eqref{eq-ineq-delta}.
Our task is to estimate from below the relaxation time
$\tau_{\cont}$. By formulas \eqref{eq-relax} and \eqref{eq-relax-contact}
$$
\tau_{\cont} = \gamma^{-1}, \qquad 
\gamma = - \frac{\partial H}{\partial Z}(Z_\infty,Q_\infty)\;.
$$

Observe that $\partial H/\partial Z(0,0)$ and $\partial H/\partial Q (0,0)$
vanish as $\Sigma$ has a singularity at the origin. Write
$$H(Z,Q) = \sum_{m\geq 0,n\geq 0,m+n \geq 2} r_{mn}Z^mQ^n\;.$$
We claim that $r_{02}=r_{11}=0$. Indeed, by
\eqref{eq-Z} $Z(P) = \theta P^3/3 + O(P^4)$ and
by \eqref{eq-Q} $Q(P) = \theta P^2/2 + O(P^3)$.
Look at the expansion in $P$ of the equation
\begin{equation}\label{eq-defin}
H(Z(P),Q(P))=0\;.
\end{equation}
Call $3m+2n$ a {\it weight} of the monomial $Z^mQ^n$. If $r_{02}\neq 0$,
$Q^2$ is the unique monomial of the minimal weight $4$. This contradicts to \eqref{eq-defin},
and hence $r_{02} = 0$. If $r_{11}\neq 0$, $ZQ$ is the unique monomial of  the minimal weight $5$,
which again contradicts to \eqref{eq-defin}. Thus, $r_{11}=0$, and the claim follows.

Since $r_{11}=0$,
$$
\frac{\partial H}{\partial Z} (Z,Q) = 2
r_{20}Z + r_{12}Q^2 + \rho(Z,Q)\;,
$$
where $\rho$ consists of monomials of higher weight. It follows that
\begin{equation}\label{eq- H-est}
  \left| \frac{\partial H}{\partial Z} (Z,Q)\right|
  = O(|Z| + Q^2)\;.
\end{equation}
Taking into account that $|Z_\infty| \sim Q_\infty^{3/2}$,
we get that
$$
\gamma \leq \text{const} \cdot Q_\infty^{3/2}\;.
$$
This yields
$$
\tau_{\cont} \geq  \text{const} \cdot Q_\infty^{-3/2}\;,
$$
as required.
\qed

\medskip\noindent
{\bf Proof of (iii):}
Now let us elaborate on the relaxation time for Glauber dynamics.
The Glauber equation has the form
$$
\dot{p} = u(p) = -p + \tanh \beta(q+bp)\;.
$$
We take the value of $q$ of the form $q_* + Q$, $Q >0$ and
look at the metastable equilibrium $p_{\infty}= p_*+P_{\infty}$.
Applying L'H\^{o}pital's rule as in the first step of the proof
of Theorem \ref{thm-main} we get that the relaxation time is well defined
and equals $-(u'(p_{\infty}))^{-1}$. We calculate,
taking into account that $p_*^2 =1 -1/(b\beta)$
and $u(p_{\infty})=0$, that
$$
u'(p_{\infty}) = -1 + b\beta (1- p_{\infty}^2) = -2b\beta p_* P_{\infty} + O(P_{\infty}^2)\;.
$$
Recalling that near the spinodal point $Q \sim P^2$ (see \eqref{eq-Q}),  we
see that the relaxation time equals
$\tau_{\glaub} \sim Q^{-1/2}$, as required.
\qed

\section{Power law at the spinodal point}\label{sec-power}

Write $P(t)=p(t)-p_*$, where $p_*$ is the magnetization at the spinodal point.
In the contact case, assume that the parameter $a$ in formula \eqref{eq-H-full}
does not vanish.

\begin{thm} [Main Theorem-2] \label{thm-main-main-2} Both in the contact case and in the Glauber case the relaxation dynamics of the magnetization at the spinodal point is given by the power law
\begin{equation}\label{eq-power-2}
P \sim t^{-1},\qquad \; t \gg 1\;.
\end{equation}
\end{thm}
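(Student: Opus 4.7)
The plan is to handle the two cases separately. In both, the spinodal degeneracy wipes out the linear term in the $P$-equation, so the relaxation is polynomial rather than exponential, and the specific quadratic structure then forces the $t^{-1}$ law.

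For the Glauber dynamics, set $u(p) = -p + \phi'(q_* + bp)$ and $P = p - p_*$. Differentiating the equilibrium relation $p = \phi'(q(p) + bp)$ at $p_*$ and using the defining property $q'(p_*)=0$ of a spinodal point yields $\phi''(q_* + b p_*) = 1/b$, hence $u'(p_*) = -1 + b\phi''(q_* + bp_*) = 0$. A further differentiation (or a direct computation from $\phi'(u) = \tanh(\beta u)$) gives $u''(p_*) = b^2\phi'''(q_* + bp_*) = -2b\beta p_*$, which is strictly negative since $p_* > 0$. Therefore the Glauber equation at $q = q_*$ takes the form $\dot P = -cP^2 + O(P^3)$ with $c = b\beta p_* > 0$. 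The substitution $Y = 1/P$ turns this into $\dot Y = c + O(1/Y)$, which integrates to $Y(t) = ct + o(t)$, and hence $P(t) \sim (ct)^{-1}$, as required.

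For the contact dynamics the calculation is even cleaner. Substituting $Q = 0$ in the Hamiltonian \eqref{eq-H-full} and using $\phi_0(0) = \phi_0'(0) = 0$ (contained in \eqref{eq-future}), one checks the exact identities
\[ H(Z,0) = -Z^2, \qquad \frac{\partial H}{\partial Z}(Z,0) = -2Z, \qquad \frac{\partial H}{\partial Q}(Z,0) = aZ^2. \]
Consequently the triangular system \eqref{eq-main-system} at $Q=0$ reduces to $\dot Z = -Z^2$ and $\dot P = -2ZP + aZ^2$. For an initial condition with $Z(0) > 0$ (the side from which $Z$ relaxes to zero), the first equation integrates explicitly to $Z(t) = 1/(t+t_0)$ with $t_0 = 1/Z(0)$. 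The second equation is linear in $P$ with integrating factor $(t+t_0)^2$, so $\frac{d}{dt}\bigl[(t+t_0)^2 P\bigr] = a$, yielding $P(t) = (a t + \text{const})/(t+t_0)^2 \sim a/t$. This is precisely where the assumption $a \neq 0$ is used: if $a$ vanished, the numerator would be a constant and one would get $P \sim t^{-2}$ instead.

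The main obstacle is conceptual rather than technical: one has to recognise that the spinodal degeneracy removes the linear part of each ODE and replaces it by a quadratic one, placing the two very different-looking dynamics in the same universality class. On the technical side, the only item that needs attention is the $O(P^3)$ remainder in the Glauber case, and the $Y = 1/P$ substitution reduces it to a manifestly subleading perturbation of an ODE whose leading-order solution is explicit.
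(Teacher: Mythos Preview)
Your proof is correct and follows essentially the same route as the paper: in the contact case you reduce the system at $Q=0$ to $\dot Z=-Z^2$, $\dot P=-2ZP+aZ^2$ and integrate explicitly to get $P\sim a/t$, and in the Glauber case you exploit $u'(p_*)=0$, $u''(p_*)=-2b\beta p_*<0$ to obtain $\dot P=-cP^2+O(P^3)$ and hence $P\sim t^{-1}$. Your treatment of the $O(P^3)$ remainder via the substitution $Y=1/P$ is in fact slightly more careful than the paper's, which simply approximates by $\dot P=-\eta P^2$ and declares the result.
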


\medskip\noindent{\bf Proof:}
We start with the contact case, assuming that $a \neq 0$.
Our objective is to solve the contact Hamiltonian system with the initial conditions
$$
Q(0)=0\;,\qquad
Z(0) = Z_{0} >0\;,\qquad
P(0) = P_{0} >0\;.
$$
Note that $Q(t)=Q(0)=0$.
We have $\dot{Z}=H\left(Z,Q\right)=-\left(Z-\phi_{0}\left(0\right)\right)^{2}$.
Recall that $\phi_{0}\left(Q\right)=\sum_{n=2}^{\infty}\alpha_{2n}Q^{n}$
and $\phi_{0}\left(0\right)=0$. Thus, our equation reads $\dot{Z}=-Z^{2}$, so
$$
Z(t) = \left(t+\frac{1}{Z_{0}}\right)^{-1}\;.
$$
Next,
\begin{align*}
\dot{P} & =P\frac{\partial H}{\partial Z}+\frac{\partial H}{\partial Q}\\
 & =-2P\left(1-aQ\right)\left(Z-\phi_{0}\left(Q\right)\right)-a\left(-\left(Z-\phi_{0}\left(Q\right)\right)^{2}+Q^{3}\phi_{1}\left(Q\right)^{2}\right)\\
 & +\left(1-aQ\right)\left(-2\phi_{0}^{'}\left(Q\right)\left(\phi_{0}\left(Q\right)-Z\right)+3Q^{2}\phi_{1}\left(Q\right)^{2}+2Q^{3}\phi_{1}^{'}\left(Q\right)\phi_{1}\left(Q\right)\right)
\end{align*}
For $Q=0$, we have $\phi_{0}^{'}\left(Q\right)=0$, so
$$
\dot{P} =  -2ZP +aZ^{2}= -2\left(t+\frac{1}{Z_{0}}\right)^{-1}P
+ a\left(t+\frac{1}{Z_{0}}\right)^{-2}\;.
$$
This is a linear non-homogeneous equation. Its solution is given by
$$
  P(t)= C\left(t+\frac{1}{Z_{0}}\right)^{-2}
  + a\left(t+\frac{1}{Z_{0}}\right)^{-1}\;,
$$
where the first term is the general solution of the homogeneous equation,
and the second term is a special solution of the non-homogeneous equation.
Incorporating the initial condition, we get
\begin{equation} \label{eq-const}
C= \frac{(P_{0}-aZ_{0})}{Z_{0}^{2}}\;.
\end{equation}

We rewrite this in the original coordinates as
\begin{equation}\label{eq-power}
  p(t) = p_* + C\left(t+\frac{1}{Z_0}\right)^{-2}
  + a\left(t+\frac{1}{Z_0}\right)^{-1}\;,
\end{equation}
where
\begin{equation}\label{eq-power1}
  Z_0 = z(0)-z_*\;,\qquad
  C= \frac{p(0)-p_{*}-aZ_{0}}{Z_0^2}\;.
\end{equation}
Thus $P(t) \sim t^{-1}$ when $t \to +\infty$, as required.

\medskip

In Glauber dynamics, the equation has the form $\dot{p} = u(p)$.
An easy calculations shows that
at the spinodal point $u(p_*) = 0, u'(p_*)=0, u''(p_*) = -2\eta < 0$,
where $\eta=b\beta\,p_{*}>0$.
Approximating the dynamics by the equation $\dot{P} = - \eta P^2$,
we readily deduce the power law \eqref{eq-power-2}.
Indeed, put $U(P) = u(p+p_*)$,  and write the function $U$ in the form
$$
U(P) = -\eta P^2 (1 + P\psi(P))\;,\qquad \eta >0\;,
$$
where $P$ lies in a sufficiently
small neighborhood $(-\epsilon,\epsilon)$ of $0$ (see below),
and $\psi$ in this neighbourhood satisfies a bound $|\psi| \leq c$.
Put $V(P)= -\eta P^2$. Let $P(t)$ be the solution of the equation
\begin{equation} \label{eq-ODEU}
\dot{P}=U(P)
\end{equation}
with an initial condition $P(0) >0$. Clearly, if $P(0)$ is sufficiently
close to $0$, we have $P(0) > P(t) >0$ for all $t >0$, and
\begin{equation} \label{eq-pinf}
P(t) \to 0\;.
\end{equation}
Put
$$
J := \int_{P(t)}^{P(0)} \left(\frac{1}{V(s)} - \frac{1}{U(s)}\right)ds\;.
$$
We calculate
$$
J = -\int_{P(t)}^{P(0)} \frac{\psi(s)}{\eta s(1+s\psi(s))} ds\;.
$$
Assume now that $\epsilon < 1/(2c)$.
Then $1+s\psi(s) \geq 1-c\epsilon \geq 1/2$. We estimate
$$
\bigg| \frac{\psi(s)}{\eta s(1+s\psi(s))}\bigg|  \leq \frac{c_1}{s}\;,
$$
with $c_1 = 2c/\eta$. It follows that
\begin{equation}\label{eq-est-1}
|J| \leq c_1|\ln P(t) - \ln P(0)| \leq c_1 |\ln P(t)| + c_1|\ln P(0)|\;.
\end{equation}
From the equation \eqref{eq-ODEU} we get
$$
dt
= \frac{dP}{U}
= \frac{dP}{V} +\bigg( \frac{dP}{U}-\frac{dP}{V}\bigg)\;.
$$
Thus
$$
t = \frac{1}{\eta P(t)} - \frac{1}{\eta P(0)} - J\;.
$$
By \eqref{eq-est-1} and \eqref{eq-pinf},
we get
$$
tP(t) \to \frac{1}{\eta},\qquad \; t \to +\infty\;.
$$
Thus
\begin{equation} \label{eq-final-p}
 P(t) = \frac{1}{\eta t} + o(t^{-1})\;,
\end{equation}
as required.
This completes the proof.
\qed

\section{Conclusion and open problems}
This paper contributes to
a description of nonequilibrium dynamics in the presence of a phase transition.
We have designed a contact geometric model of a nonequilibrium thermodynamic
system in the low temperature regime,
where this system describes the
time-development of the magnetization of
the Ising model with mean field type interactions and shows
a first order phase transition.
This system exhibits relaxation towards equilibrium states,
in agreement with a fundamental model of nonequilibrium thermodynamics, 
Glauber dynamics. The merit of the contact
dynamical system, defined in the thermodynamic phase space,
is that in contrast to Glauber dynamics it automatically
preserves the kernel of the Gibbs fundamental form $dz-pdq$,
i.e., preserves the fundamental thermodynamic relation.
Note that this does not automatically
  induce
  the preservation of the probability distribution function in phase space
  and vice versa. Meanwhile, it is possible to discuss relations between such a
  distribution function and a contact form
  \cite{Bravetti}.
    At the same time we have proved
a No-Go theorem stating that in a neighbourhood of the spinodal point
the relaxation time of the contact system towards a metastable equilibrium
is always
larger than the one of the Glauber system, independently of
the choice of the contact Hamiltonian.
Which of the two models provides a more accurate description of relaxation
processes remains an open problem. In particular, it would be interesting
to make a comparison with the metastable behavior of the Markov process modeling
the Ising chain relaxation in the framework of the Curie-Weiss model \cite[Chapter 13]{BH}.

Feasibility of an emulation of relaxation processes in a given region of the thermodynamic
phase space by using contact flows depends on the postulated dynamical behavior near metastable
equilibria. In the present paper we assumed that the metastable equilibria are stable for the flow.
This assumption, however, has limitations. For instance, it  would prevent us from designing the desired contact Hamiltonian in a neighbourhood of the double point $D$ of the front, see Figure \ref{ising-fig2}. Indeed, look at the regions bounded by the stable and metastable branches, and recall that the contact evolution of the energy $z$ is given by $\dot{z}=H$.
Since the stable branch $S_{-}$ lies above $M_{-}$, $\partial H/\partial z$
is necessarily negative at $S_{-}$, and hence $M_{-}$ become unstable.
We refer the reader to \cite{Goto} where metastable states were
treated as dynamically unstable ones.

While in the present paper we have focused on the contact dynamics
in a three dimensional thermodynamic phase space,
we expect that our methodology extends to higher dimensional models.
This requires a more systematic
procedure of designing contact Hamiltonians involving more sophisticated
tools of singularity theory.

Let us mention also that contact geometry and contact dynamics form
just one of several facets of relations
between thermodynamics and differential geometry. In particular,
in the present paper we have  not touched Riemannian geometry of the
thermodynamic phase space. It would be interesting to explore its benefits
for modeling relaxation processes of
nonequilibrium  thermodynamics
near spinodal points.

The approach of this paper should be applicable to
emulation of relaxation processes in other thermodynamic models,
both in terms of designing a suitable contact Hamiltonian,
and understanding limitations of the contact geometric framework.
Such models include, among others,  black hole physics and control systems.

We close this paper with two open problems. 

\subsection{Open problem I: Glauber equation in higher dimensions}
Consider an ordinary differential equation
\begin{equation}\label{eq-kubo-11}
\dot{p} = -p + \phi'(q+bp)\;,
\end{equation} where $b$ is a real parameter, $p,q \in \R^n$, and  $\phi$ is a smooth function on $\R^n$. Here $q = \text{const}$, and we write $\phi'$ for the gradient of $\phi$.
On the one hand, equation \eqref{eq-kubo-11} is a direct generalization of the Glauber equation
\eqref{eq-kubo}. On the other hand, it is closely related to the description of nonequilibrium thermodynamics in terms of affinities and fluxes, see \cite{Haslach}.
Let us explain this in more detail.

To this end, make a change of variables
$$
x = p+b^{-1}q,\qquad
y= q,\qquad
w = z+b^{-1}\frac{q^2}{2}\;,
$$
so that the contact form is given by
$$
dz-pdq=dw-xdy\;.
$$
In the new coordinates equation \eqref{eq-kubo-11} reads
\begin{equation}
\label{eq-Haslach-10}
\dot{x} = b^{-1}(y - bx+ b\phi'(bx))\;.
\end{equation}
Set $c=b^{-1}$, $\psi(x):= -bx^2/2+ \phi(bx)$ and
define the generalized energy function in the sense
of Haslach,
$$
E(x,y) = xy + \psi(x)\;.
$$
With this language, equation \eqref{eq-Haslach-10}, i.e., the generalized Glauber equation
written in the new coordinates, has the form
\begin{equation}\label{eq-Haslach-2}
\dot{x} = c \frac{\partial E}{\partial x} (x,y),\qquad \; y = \text{const}\;.
\end{equation}
This is equivalent to the Haslach gradient flow equation for the affinities \cite[equation (10)]{Haslach}, where the latter are given by $X_i = \partial E/\partial x_i$, and the
the gradient is understood with respect to the push-forward of the
Euclidean metric on $\R^n(x)$ to the space of affinities under the map
$x \mapsto X$.
Here we tacitly assume that this map is a local diffeomorphism.

With this motivation at hand, we address the following problem.

\begin{problem} Extend the results of the present paper to equation
\eqref{eq-kubo-11} in arbitrary dimension.  More precisely, we propose to look
at the neighbourhoods of the singular points of the front projection of the equilibrium Legendrian submanifold
$$
\left\{\ p = \phi'(q+bp),\quad
z= -\frac{p^2}{2}+ b^{-1}\phi(q+bp)
\ \right\} \subset \R^{2n+1} \;,
$$
imitate the dynamics given by \eqref{eq-kubo-11}
by a contact Hamiltonian flow, and explore the limitations.
\end{problem}

\medskip
We expect that while the general strategy should follow the lines
of the present paper, the analysis of singularities should be
more sophisticated.

\subsection{Open problem II: a microscopic approach to contact dynamics}
Interestingly enough, the question about the power law at the spinodal
points (cf. Section \ref{sec-power} above)  was addressed in the literature \cite{A,L}, albeit in a different context of a Monte-Carlo
type dynamics discussed in \cite{Z}. These papers focus on the dynamics
corresponding to the arrival of the system at the metastable state (see \cite[Section 5]{A} and \cite[Section 1]{L}), and in particular on the corresponding scaling behavior.

Paper \cite[p.37, Section 4.4]{Z} mentions that
``it has long been challenging whether stochastic dynamics is equivalent to the fundamental deterministic dynamics, and vice versa.'' Performing such a comparison in our situation
is an open and apparently difficult mathematical problem. The first step would be, following  a proposal by S.~Shlosman discussed in \cite{EP}, to derive rigorously ODE \eqref{eq-kubo}  in an appropriate thermodynamic limit of the Curie-Weiss model. If this succeeds, the next step would be to derive the contact dynamics generated by Hamiltonian \eqref{eq-H-full} starting from the microscopic set up. 

Let us perform a naive comparison of our results on the contact dynamics with the findings of \cite{L} and \cite{A}.

In  \cite{L}, the quantity $P$ is called $\Delta m$ and
is introduced after formula (13);  it is calculated
at the spinodal point right after formula (25) as $\sim t^{-0.98}$.

In \cite{A},  the quantity $P$ is called  $\Delta m$ and
is introduced before formula (27);
it is calculated at the spinodal point right after formula (31)
as $\sim t^{-1}$, for sufficiently large values of time $t$.

These results show a good agreement with our formula \eqref{eq-power-2}.
This can be considered as an argument in favor of the existence of a rigorous
microscopic approach to contact dynamics.

\subsection*{Acknowledgments}
The author S.G. was partially  supported by the JSPS (KAKENHI)
(Grant No. JP19K03635). The other authors, S.L and L.P.
were partially supported by the Israel Science Foundation grant 1102/20.
The authors S.G. and L.P. thank Minoru Koga at Nagoya University
for giving various suggestions and fruitful discussions on this study.
L.P. thanks Michail Zhitomirskii from the Technion for a consultation on
singularity theory.
We thank anonymous referees for useful comments.

\subsubsection*{Data availability statement}
No new data were created or analysed in this study.

\if0
\noindent
\begin{tabular}{l}
{\bf Shin-itiro Goto} \\
Center for Mathematical Science and Artificial Intelligence\\ Chubu
University\\
1200 Matsumoto-cho, Kasugai, Aichi 487-8501, Japan
\end{tabular}
\medskip
\noindent
\begin{tabular}{l}
{\bf Shai Lerer, Leonid Polterovich} \\
School of Mathematical Sciences\\
Tel Aviv University \\
69978, Tel Aviv, Israel
\end{tabular}
\fi


\begin{thebibliography}{99}
\bibitem{A}   Anteneodo C, Ferrero EE, Cannas SA. Short-time dynamics of finite-size mean-field systems. Journal of Statistical Mechanics: Theory and Experiment. 2010 Jul 30;2010(07):P07026.

\bibitem{BH} Bovier A, Den Hollander F. Metastability: a potential-theoretic approach. Springer; 2016 Feb 11.

\bibitem{Bravetti} Bravetti A. Contact Hamiltonian dynamics: The concept and its use. Entropy. 2017 Oct 11;19(10):535.

\bibitem{EP} Entov M, Polterovich L. Contact topology and non-equilibrium thermodynamics. arXiv preprint arXiv:2101.03770. 2021 Jan 11.

\bibitem{Glauber}
Glauber, RJ. Time-dependent statistics of the Ising model, Journal of Mathematical Physics 4 (1963), 294-307.

\bibitem{Goto0} Goto S. Legendre submanifolds in contact manifolds as attractors and geometric nonequilibrium thermodynamics. Journal of Mathematical Physics. 2015 Jul 28;56(7):073301.

\bibitem{Goto} Goto S. Nonequilibrium thermodynamic process with hysteresis and metastable states—A contact Hamiltonian with unstable and stable segments of a Legendre submanifold. Journal of Mathematical Physics. 2022 May 1;63(5):053302.

    \bibitem{Grmela} Grmela M. Contact geometry of mesoscopic thermodynamics and dynamics. Entropy. 2014 Mar 21;16(3):1652-86.


   \bibitem{Haslach} Haslach Jr HW. Geometric structure of the non-equilibrium thermodynamics of homogeneous systems. Reports on Mathematical Physics. 1997 Apr 1;39(2):147-62.

   \bibitem{Kubo}
Kubo R, Toda M, and Hashitsume N. Statistical Physics II. Springer. 1991.


\bibitem{L} Loscar ES, Ferrero EE, Grigera TS, Cannas SA. Nonequilibrium characterization of spinodal points using short time dynamics. The Journal of chemical physics. 2009 Jul 14;131(2):024120.

\bibitem{ME} Meibohm J, Esposito M. Finite-time dynamical phase transition in nonequilibrium relaxation. Physical Review Letters. 2022 Mar 18;128(11):110603.

    \bibitem{N} Newhouse SE. On a differentiable linearization theorem of Philip Hartman. Modern Theory of Dynamical Systems: A Tribute to Dmitry Victorovich Anosov, Contemporary Mathematics. 2017 Jun 19;692:209--62.

     \bibitem{Suzuki-Kubo}   Suzuki M, Kubo R. Dynamics of the Ising model near the critical point. I. Journal of the Physical Society of Japan. 1968 Jan 5;24(1):51--60.

 \bibitem{V}van der Schaft A, Maschke B. Geometry of thermodynamic processes. Entropy. 2018 Dec 4;20(12):925.

\bibitem{W} Wall CT. Singular points of plane curves. Cambridge University Press; 2004 Nov 15.


   \bibitem{Z} Zheng B. Numerical simulations of critical dynamics far from equilibrium. InComputer Simulation Studies in Condensed-Matter Physics XVI 2006 (pp. 25-42). Springer, Berlin, Heidelberg.

   \bibitem{ZubarevI}
     Zubarev DN, Morozov V, and Ropke G.
 Statistical Mechanics of Nonequilibrium Processes, Basic Concepts,
 Kinetic Theory Vol. 1. Wiley-VCH; 1996.

   \bibitem{ZubarevII}
  Zubarev DN, Morozov V, and Ropke G.
 Statistical Mechanics of Nonequilibrium Processes, Relaxation and Hydrodynamic Processes Vol. 2. Wiley-VCH; 1997.



\end{thebibliography}
\end{document}